\let\NAT@parse\undefined
\newcommand{\AddStatexIndent}{%
	\hskip\algorithmicindent}
\algnewcommand{\LineComment}[1]{\Statex \hskip\ALG@thistlm // #1}
\algnewcommand{\LineCommentIndent}[1]{\Statex \hskip\ALG@thistlm\AddStatexIndent // #1}
\newsavebox{\ieeealgbox}
\newenvironment{boxedalgorithmic}
  {\begin{lrbox}{\ieeealgbox}
   \begin{minipage}{\dimexpr\columnwidth-2\fboxsep-2\fboxrule}
   \begin{algorithmic}[1]}
  {\end{algorithmic}
   \end{minipage}
   \end{lrbox}\noindent\fbox{\usebox{\ieeealgbox}}}
\definecolor{mGreen}{rgb}{0,0.6,0}
\definecolor{mGray}{rgb}{0.5,0.5,0.5}
\definecolor{mPurple}{rgb}{0.58,0,0.82}
\definecolor{backgroundColour}{rgb}{0.95,0.95,0.92}
\lstdefinestyle{CStyle}{
	backgroundcolor=\color{backgroundColour},   
	commentstyle=\color{mGreen},
	keywordstyle=\color{magenta},
	numberstyle=\tiny\color{mGray},
	stringstyle=\color{mPurple},
	basicstyle=\footnotesize,
	breakatwhitespace=false,         
	breaklines=true,                 
	captionpos=b,                    
	keepspaces=true,                 
	numbers=left,                    
	numbersep=5pt,                  
	showspaces=false,                
	showstringspaces=false,
	showtabs=false,                  
	tabsize=2,
	language=C
}
\newcommand{\bsy}[1]{\boldsymbol{#1}}
\newtheorem{amsprop}{Proposition}
\newtheorem{amsdef}{Definition}
\newcounter{algoCounter}
\title{Monitoring Large Crowds With WiFi: A Privacy-Preserving Approach} 
\author{Jean-Fran\c cois~Determe$^*$,
	Sophia~Azzagnuni$^*$,
	Utkarsh~Singh$^*$,
	Fran\c cois~Horlin$^*$,\\
	and~Philippe~De~Doncker
	\thanks{$^*$All authors are with the OPERA Wireless Communications Group, Université libre de Bruxelles, 1050 Brussels, Belgium. Corresponding e-mail: Jean-Francois.Determe@ulb.be. Innoviris funded Jean-François Determe and Utkarsh Singh. }
}
\begin{document}
\maketitle

\begin{minipage}{0.9 \linewidth}
	\textbf{\underline{IEEE copyright notice}} --- Paper accepted in \textit{IEEE Systems}, current DOI: 10.1109/JSYST.2021.3139756 --- \copyright \, 2022 IEEE. Personal use of this material is permitted.  Permission from IEEE must be obtained for all other uses, in any current or future media, including reprinting/republishing this material for advertising or promotional purposes, creating new collective works, for resale or redistribution to servers or lists, or reuse of any copyrighted component of this work in other works.
\end{minipage}

\begin{abstract}
This paper presents a crowd monitoring system based on the passive detection of probe requests. The system meets strict privacy requirements and is suited to monitoring events or buildings with a least a few hundreds of attendees. We present our counting process and an associated mathematical model. From this model, we derive a concentration inequality that highlights the accuracy of our crowd count estimator. Then, we describe our system. We present and discuss our sensor hardware, our computing system architecture, and an efficient implementation of our counting algorithm---as well as its space and time complexity. We also show how our system ensures the privacy of people in the monitored area. Finally, we validate our system using nine weeks of data from a public library endowed with a camera-based counting system, which generates counts against which we compare those of our counting system. This comparison empirically quantifies the accuracy of our counting system, thereby showing it to be suitable for monitoring public areas. Similarly, the concentration inequality provides a theoretical validation of the system.  
\end{abstract}

\section{Introduction} \label{sec:intro}
Crowd counting systems count crowd numbers in specific geographical areas and provide these numbers to personnel responsible for their analysis. What follows reviews some use cases of crowd counting systems.\\

In the particular case of public events, event managers have expressed their interest in leveraging modern counting technologies to i) monitor events in real time \cite[Sec.~7]{martella2017current}, ii) predict crowd counts in the future \cite[Sec.~5.1.1]{martella2017current}, and iii) perform post-analyses, to analyze the causes of overcrowding after its occurrence. In particular, computing real-time crowd densities in strategic areas allows security managers to decide whether an event has reached its maximum capacity \cite{still2014introduction, martella2017current}. Crowd count time series can be fed into forecasting algorithms to predict overcrowding \cite{determe2020forecasting, singh2020crowd}---which allows security personnel to execute countermeasures anticipatedly.\\

Crowd management in large events is not the only endeavor that benefits from crowd counting systems. For example, we installed the crowd counting system this paper presents on one of the main commercial streets of Brussels, namely \textit{Rue Neuve} (\textit{Nieuwestraat} in Dutch), to estimate attendance during winter sales. It has been reinstalled in the same street to track attendance as Covid-19 lockdown measures get incrementally relaxed. Finally, we also installed our monitoring system in the largest library of our university: the Humanities library.\\

To summarize, the use cases of crowd counting systems include the monitoring of \textit{i)} public events (to prevent overcrowding) \textit{ii)} commercial streets (to estimate attendance) \textit{iii)} public places wherein some degree of social distancing should be attained and \textit{iv)} public buildings (e.g., university libraries).

\subsection{Related work} \label{subsec:stateoftheart} 

\subsubsection{Mainstream approaches to crowd counting}
This section reviews the main approaches to crowd counting. Because the measurement principles underlying some of these approaches make their field of applicability different from that of the system of this manuscript, no extensive details about them are provided. The main recent works contending with this manuscript are more thoroughly commented in the next section. The reviewed approaches below are mainly inspired from \cite[Sec.~3]{cecaj2021sensing} and \cite[Sec.~1.1]{kaminska2019indoor}. Another excellent review of recent works in crowd monitoring making use of WiFi is \cite[Sec.~2 and Table~1]{uras2020pma}. Other more general reviews are \cite{singh2020crowdreview}, \cite[Sec.~1.1]{kaminska2019indoor} and \cite[Sec.~2 and Table~1]{ryu2020wifi}.\\

A common counting approach is cameras, traditional or thermal \cite{gade2014thermal}. Cameras typically suffer from privacy concerns; from a technical point of view, they suffer from line-of-sight obstructions, non-ideal meteorological conditions, low illumination and high contrast. Thermal cameras are less sensitive to all these issues except for line-of-sight obstructions.\\

Sensor networks are another option. These represent a vast body of approaches. For example, CO2 sensors are an option but are sensitive to air renewal. Acoustic sensors are another option and can be combined with the former one \cite{agarwal2014algorithms}. Another approach, which shall be more extensively developed in the following subsection, is a network of sensors measuring their pairwise communication channels and computing signal attenuation to infer crowd density.\\

Aggregated mobile phone data, which provide time series of numbers of people per geographical cell \cite{calabrese2014urban} are another interesting avenue of information for estimating crowd counts. However, the granularity of these data is sometimes too coarse, making them unsuitable to estimate the attendance of, e.g., a university library. \\

A modern and newer solution is based on WiFi monitoring systems. Such systems wait for individuals' smartphones to connect to a network or install an application (cooperative approach), or they monitor over-the-air beacon signals sent by these smartphones (non-cooperative approach). This solution is newer than most of the previous ones because, two decades ago, no one had Wi-Fi or Bluetooth-enabled electronic devices. The subsection that follows discusses this solution extensively.\\

Another bleeding-edge approach is the monitoring of the electromagnetic spectrum \cite{donelli2021crowd}. This solution is non-cooperative and consists in monitoring frequency bands used by telco operators and their customers to make calls, send text messages and have mobile internet access. We do not have the legal expertise to determine to what extent licensed frequency bands can be monitored in each European country, however. \\

Finally, another emerging technique is the use of modern radars to count people or estimate their flow \cite{yildirim2021super}. These radars are non-cooperative systems and can even reuse existing over-the-air transmissions for radar processing (they are then called passive radars). The feasibility of this last solution for dense crowds remains an open topic of research, however.

\subsubsection{The most relevant former works on crowd counting}
Several works from other teams have tackled the problem of crowd counting and share similarities with the present manuscript. When possible, this section presents accuracy figures for surveyed papers. Table~\ref{tab:comparisonWorks} summarizes the main features of the counting systems that are the main contenders to that presented in this paper. Section~\ref{sec:comparisonContenders} compares them with the system this paper proposes.\\

The authors of \cite{denis2020large, kaya2020large, denis2020sensing} deployed tens of nodes across rooms to be monitored and make them communicate with one another. The received powers for all communication links are a proxy for the number of attendees, because human bodies attenuate WiFi signals (the higher the attenuation, the higher the number of people). This solution is fully non-cooperative, is compatible with low numbers of attendees (<100 people), is not affected by MAC address randomization and can be calibrated easily when the monitored room is empty. However, nodes must be at a low height (< 2 meters) for human bodies to attenuate signals. Moreover, tens of nodes are necessary to monitor a single room (they installed approximately one node per 15 to 40 square meters based on \cite[Fig.~2,~13, and~24]{denis2020large}). Besides, their counting errors are higher than ours:  Results in \cite[Fig.~11]{denis2020large} indicate a mean relative error ranging from 14.6 \% to 22.1 \% depending on the training method.\\

The work \cite{kaminska2019indoor} proposes a crowd monitoring solution for user localization in large buildings. They rely on clients connected to access points they control. Therefore, their approach is partially cooperative. As a result, they depend on users willingly connecting to their access points but do not have to deal with MAC randomization issues. Their method estimates the positions of individuals in a x-y plane for each floor and crowd counting is a byproduct. While the authors have arguments to claim that their method should not be sensitive to high crowd densities \cite[Sec.~5.1]{kaminska2019indoor}, their experiments cover environments hosting less than 100 people. Their accuracy figures range from 90 to 96 \% depending on the area monitored. A similar work is \cite{zhang20193d}.\\

An older and seminal work is \cite{musa2012tracking} in which the authors emulate APs for common service set identifiers (SSIDs) and SSIDs present in the information elements (IEs) of detected PRs. They also send request to send (RTS) packet injection. \cite{musa2012tracking} thus describes an active scanning system.\\

Another work is \cite{li2018experimental}, whose described system collects data essentially identical to these of the present work (entries that consist of a timestamp, a MAC address and a received signal strength indicator). Their focus is on density monitoring and trajectory tracking. They do not refer to MAC address randomization, probably because their measurements were obtained a few years ago (between 2014 and 2016 according to \cite[Sec.~IV-C]{li2018experimental}), a time at which MAC address randomization was not a significant issue. Therefore, it is not clear that the accuracy of their monitoring system would be as high with today's smartphone anonymization. We reverse engineered \cite[Fig.~13]{li2018experimental} to estimate the average relative counting error and obtained a figure of 14.5 \%.\\

The authors of \cite{uras2019pma} and \cite{uras2020pma} presented a crowd monitoring based on WiFi probe requests. Their work filters out all locally administered MAC addresses \cite[Sec.~4]{uras2020pma},  relies on SHA-256 hashes without peppers \cite[Sec.~3.3]{uras2020pma} for data anonymization purposes, thereby making their anonymization procedure somewhat vulnerable to brute force attacks \cite{demir2014analysing, demir2017pitfalls, marx2018hashing}.

\begin{table}
	\centering
	\caption{Comparison of crowd counting systems most similar to that of the present manuscript --- ``Cooperation" refers to the individuals having to connect to a specific access point or install an application for the counting system to work properly --- Accuracy refers to the mean relative deviation of the counts from the ground truth (it is a mean absolute percentage error)}
	\begin{tabular}{p{1.5cm} || p{7cm} | p{2.5cm} | p{1.6cm}}
		Work(s) & Principle \& Validation & Cooperation & Accuracy \\
		\hline \hline
		\cite{denis2020large, kaya2020large, denis2020sensing} (2020) & Nodes communicating and estimating attenuation as a proxy for human presence. Validated for hundreds of attendees and more. & Not required & 14--22 \% \\
		\hline
		\cite{kaminska2019indoor} (2019) & Number of people connected to access points are measured and methods from geostatistics applied to estimate their position. Validated for $<$ 100 individuals. & Required & 4--10 \% \\
		\hline
		\cite{li2018experimental} (2018) & Density monitoring and trajectory tracking based on Wi-Fi probe requests (data set is from 2014-2016). Validated on hundreds of individuals. & Not required & $\simeq$ 14 \% \\
		\hline
		\cite{uras2019pma, uras2020pma} (2019--2020) & Density monitoring and trajectory tracking based on Wi-Fi probe requests (with randomized MAC addresses filtered out). Validated but without ground truths. & Not required & Not available \\
		\hline
		\cite{donelli2021crowd} (2021) & Crowd counting based on the analysis of the electromagnetic spectrum on cellular bands. & Not required & 5--15 \% 
	\end{tabular}
	\label{tab:comparisonWorks}
\end{table}

\subsection{Contributions} \label{subsec:contributions}

The contributions of this paper focus on a WiFi-based crowd monitoring system that detects probe requests (PRs) over the air. PRs are WiFi control packets emitted by user equipements (UEs) (e.g., smartphones) that request nearby access points (APs) to make their existence known. The rate of PR transmission is a proxy for the number of smartphones with WiFi enabled in the covered area---which, up to an \textit{extrapolation factor}, approximates the number of attendees. Thus, the extrapolation factor converts the measured rate of PRs into a number of attendees.\\

The contributions are the following:
\begin{enumerate}
	\item A novel WiFi-based sensing process enforcing strict privacy standards. This includes a time and space/memory complexity analysis and a review of privacy features.
	\item A mathematical model of the sensing process and an associated concentration inequality for the proposed unbiased crowd count estimator; it shows that it concentrates around its expectation and that the concentration increases with number of attendees.
	\item An experimental validation of the sensing process using real-world measurements from a library endowed with a third-party camera-based counting system.
\end{enumerate}

This paper relies on indoor crowd counts for experimental validation but it is merely a matter of convenience for validation by cameras: third-party camera-based counting systems can be easily installed in such controlled environments, with little need for a vast network of cameras and time-consuming calibration procedures. Installing camera systems in complex environments with numerous line-of-sight obstructions and overlapping fields of vision would be more involved.  Thereby, choosing an indoor environment with controlled entrances and exits eases the experimental validation of the counting system by providing an environment for which cameras are efficient and reliable. Nevertheless, it does not mean that the counting system cannot be installed outdoors.

\subsection{Relation to the former works of the authors}

Our previous works on forecasting \cite{determe2020forecasting, singh2020crowd}---whose main purpose was to demonstrate the interest of crowd monitoring systems for forecasting---gave a minimal overview of the counting system this manuscript presents. This manuscript details the system architecture and compares counts of the WiFi system against those from a third-party camera-based system for an indoor environment. It also presents mathematical results on the accuracy of the estimator and the effects of the anonymization procedure. Finally, it presents a detailed complexity analysis of the counting algorithm.\\

This manuscript presents new experimental results in an indoor environment. It is worth pointing out that our previous work \cite{determe2020forecasting} already provided some evidence of the accuracy of the counting system in an outdoor environment. It compared the counts generated by the counting system of this paper against those of a telecommunication operator and showed both series of counts to match. 

\subsection{Outline}

The paper is organized as follows. Section~\ref{sec:sensingProcess} describes the sensing process. In particular, it presents the mathematical model for the sensing process and the associated concentration inequality. Section~\ref{sec:digitalArchitecture} presents the digital architecture of the system, including a complexity analysis of the counting algorithm. Section~\ref{sec:legalMatters} discusses how the present system is compatible with modern European privacy laws. Section~\ref{sec:experimentalValidation} then validates the accuracy of the counting system using real-world measurements acquired at the Humanities library of our university using a third-party camera counting system. Section~\ref{sec:practicalConsiderations} briefly describes practical matters when designing and deploying WiFi monitoring systems. Finally, Section~\ref{sec:comparisonContenders} compares the present system with its contenders listed in Table~\ref{tab:comparisonWorks} and Section~\ref{sec:conclusion} is the conclusion.

\section{The sensing process} \label{sec:sensingProcess}

\subsection{The principle}

The estimated crowd counts of the counting system are derived from PRs \cite[Chapter~4]{gast2005802}. WiFi devices periodically transmit PRs to request nearby access points (APs) to send back probe responses. This is an active scanning mechanism to discover APs. WiFi devices transmit PRs even when not linked to a WiFi network. Thus, measuring the rate of PRB transmission in an area gives an idea of the number of WiFi-enabled devices in the covered area, a number which can be extrapolated to a crowd count. See Figure~\ref{fig:sensingProcess} for an illustration of the process. Several almost identical PRs are sent in a row, within a time frame lasting less than $10$ ms \cite[Sec.~2.1]{matte2016defeating}; in this paper, those sets of PRs are referred to as  probe request bursts (PRBs).

\begin{figure}[h]
	\centering
	\includegraphics[scale=3.25]{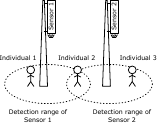}
	\caption{Two sensors sniff probe request bursts of three individuals carrying smartphones. Dashed ellipses illustrate the detection range of the associated sensor.}
	\label{fig:sensingProcess}
\end{figure}

\subsection{Probe requests}

PRs contain a source address (SA) field of six bytes \cite[Fig.~4-52]{gast2005802}, which is usually a randomized MAC address. Recent operating systems implement this randomization process to make smartphone tracking difficult \cite{freudiger2015talkative, vanhoef2016mac, matte2016defeating}. \\

Some older works from 2016-2017 show that anonymized PRs may be ``deanonymized" (see, e.g., \cite{vanhoef2016mac, martin2017study}). In the future, however, deanonymization methods may not work if operating systems strengthen anonymization. For example, \cite[Section~4]{vanhoef2016mac} partially relies on sequence numbers \cite[Figure~4-52]{gast2005802}, which are numbers associated with each PR that are incremented in between consecutive PRs. So far, it appears that such sequence numbers are not randomly reset from one PRB to the next one---a fact that the authors \cite{gast2005802} leverage to track smartphones. Should sequence numbers be randomly reset in the future, the strategy may not work anymore. More generally, MAC address randomization is likely to get tougher in the future \cite[Sec.~1]{denis2020large}; as pointed out in \cite[Sec.~4]{uras2020pma}, ``the IEEE 802.11 working group has created a Topic Interest Group (TIG) on Randomized and Changing MAC addresses (RCM)".\\

As discussed later on, MAC address randomization does not affect the counting system, which makes it future-proof, in opposition to other WiFi monitoring systems either deanonymizing PRBs or identifying non-randomized PRBs (see \cite{uras2020pma}).

\subsection{A mathematical sensing model} \label{subsec:mathSensingModel}

This section derives the statistical estimator that estimates counts from a measured rate of PRB transmissions. It also presents a statistical analysis of the estimator,  deriving its distribution and a concentration inequality for it. In what follows, $\mathbb{P}$ and $\mathbb{E}$ denote the probability of an event and the mathematical expectation, respectively.\\

First of all, let $n_{\rm ppl}$ denote the number of individuals in an area. This is the quantity the estimator should estimate as accurately as possible. In what follows, index $i$ ($1 \leq i \leq n_{\rm ppl}$) indexes a particular individual.\\

These individuals may or may not have a device with features enabled. Moreover, smartphones send PRBs at different rates depending on the operating system version. These two effects are accounted for by random variables $P_i$ ($1 \leq i \leq n_{\rm ppl}$) defined for each individual: $P_i$ is the average number of PRBs with different source addresses that the WiFi device carried by individual $i$ sends over the air per time frame of $t_f$ seconds. The time frame duration $t_f$ is assumed to be sufficiently short to ensure that no WiFi device sends PRBs with different source addresses in that time frame; as a result, $P_i \leq 1$. If individual $i$ carries no WiFi-enabled device or has disabled WiFi on a capable device, then $P_i = 0$. The $P_i$  are independently and identically distributed (iid.).  We denote the mean of $P_i$ by $p$, that is $\mathbb{E} \lbrack P_i \rbrack =: p$.\\

There are $K < \infty$ possible values $\lbrace \alpha_k \rbrace_{1 \leq k \leq K}$ for $P_i$ because there exists a finite number operating system configurations; the probability $r_k := \mathbb{P} \lbrack p_i = \alpha_k \rbrack$ obeys $\sum_{k=1}^K r_k = 1$, with $\alpha_k = 0$ corresponding to individual $i$ having no WiFi-enabled device.\\

The number of distinct PRBs within a time frame of $t_f$ seconds is $X := \sum_{i=1}^{n_{\rm ppl}} X_i$, with $X_i$ being equal to $1$ if individual $i$'s WiFi device sends a PRB. The equalities $\mathbb{P}[X_i = 1 | P_i = \alpha_k] = \alpha_k$ and $\mathbb{P}[X_i = 0 | P_i = \alpha_k] = 1 - \alpha_k$ follow from this definition. Hence, the law of total probability shows that the marginal distribution of $X_i$ obeys \cite[Sec.~II-D]{determe2020forecasting} $\mathbb{P}[X_i = 1] = \sum_{k=1}^K \mathbb{P}[X_i = 1 | P_i = \alpha_k] \mathbb{P}[P_i = \alpha_k] = \sum_{k=1}^K \alpha_k r_k =: \mathbb{E} \lbrack P_i \rbrack =: p$. The mean of $X_i$ is $\mathbb{E} \lbrack X_i \rbrack := 1 \; \mathbb{P} \lbrack X_i = 1 \rbrack + 0 \; \mathbb{P} \lbrack X_i = 0 \rbrack = \mathbb{E} \lbrack P_i \rbrack =: p$. Consequently, an unbiased estimator of the number of individuals $n_{\rm ppl}$ is 
\begin{equation} \label{eq:ChatBasicDef}
	\hat{C} := \beta X,
\end{equation}
where $\mathbb{E} \lbrack \hat{C} \rbrack = n_{\rm ppl}$ with extrapolation factor $\beta := 1/p$. Variable $X$ is a sum of $n_{\rm ppl}$ statistically independent and identically distributed (iid) Bernoulli random variables $X_i$ of parameter $p := \mathbb{E} \lbrack P_i \rbrack$. As a result, $X = \hat{C}/\beta$ follows a binomial distribution $B(n_{\rm ppl},p)$.

\subsection{Concentration inequalities and asymptotic analysis}

Now that an unbiased estimator  and its distribution have been derived, this subsection derives a a concentration inequality for the estimator $\hat{C}$ around its mean. Loosely speaking, this inequality is theoretical evidence that the estimator is reliable. Results from \cite{buldygin2013sub} are used and the resulting concentration inequality is compared against a canonical concentration inequality for bounded random variables. A key quantity depending on $p$ is $K(p)$, defined below.
\begin{amsdef} \label{amsdef:defKp}
	Let $K : \lbrack 0, 1 \rbrack \rightarrow \mathbb{R} : p \mapsto K(p)$, where \cite[Eq.~(4)]{buldygin2013sub}
	\begin{equation}\label{eq:defKp}
		K(p) = 
		\begin{cases}
			0 & \textrm{ if } p \in \lbrace 0, 1 \rbrace\\
			1/4 & \textrm{ if } p = 1/2 \\
			\dfrac{p-q}{2(\log p - \log q)} & \textrm{ if } p \in (0,1) \backslash \lbrace 1/2 \rbrace
		\end{cases},
	\end{equation}
	with $q := 1 - p$.
\end{amsdef}
Proposition~\ref{amsprop:KpProperties} helps understanding the shape of $K(p)$.
\begin{amsprop} \label{amsprop:KpProperties}
	With $K$ defined as in~(\ref{eq:defKp}), we have the following properties:
	\begin{enumerate}
		\item $K$ is continuous and convex
		\item $K$ is symmetric around $p = 1/2$
		\item $K$ increases on $p \in \lbrack 0, 1/2 \rbrack$ and decreases on $\lbrack 1/2, 1 \rbrack$
		\item $K(p) \leq 1/4$.
	\end{enumerate}
\end{amsprop}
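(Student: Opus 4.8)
The plan is to handle all four properties through the single substitution to the logit variable $t := \log p - \log q = \log\tfrac{p}{q}$, which maps $(0,1)$ bijectively onto $\mathbb{R}$, sending $p=1/2$ to $t=0$ and the endpoints $p\in\{0,1\}$ to $t\to\mp\infty$. A short computation gives $p-q=\tanh(t/2)$, $\tfrac{p-q}{pq}=2\sinh t$, and $\tfrac{dt}{dp}=\tfrac{1}{pq}$, so that on $(0,1)\setminus\{1/2\}$ one has $K=\tfrac{\tanh(t/2)}{2t}$. Two assertions are then essentially immediate. Symmetry (Property~2) holds because $p\mapsto 1-p$ sends $t\mapsto -t$ while $K$ depends on $t$ only through the even function $\tanh(t/2)/(2t)$; equivalently, in the original variables both $p-q$ and $\log p-\log q$ change sign under $p\mapsto 1-p$, leaving the ratio unchanged. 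Continuity reduces to three limits: as $t\to 0$ the $0/0$ form resolves via $\tanh(t/2)=t/2+O(t^3)$ to $K\to 1/4$, matching the assigned value; and as $t\to\pm\infty$ a bounded numerator $p-q$ over a divergent logarithmic denominator forces $K\to 0$, matching $K(0)=K(1)=0$.

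For the monotonicity in Property~3 I would differentiate once. Writing $K=f/g$ with $f=p-q$ and $g=2(\log p-\log q)$, the quotient rule gives $K'=(f'g-fg')/g^{2}$, and the substitution collapses the numerator: $f'g-fg'=4\log\tfrac{p}{q}-\tfrac{2(p-q)}{pq}=4(t-\sinh t)$, so that $K'(p)=\tfrac{t-\sinh t}{t^{2}}$. Since $t^{2}>0$, the sign of $K'$ is that of $t-\sinh t$, which is positive for $t<0$ (i.e.\ $p<1/2$) and negative for $t>0$, by the elementary inequality $\sinh t>t$ on $(0,\infty)$. This yields the claimed increase on $[0,1/2]$ and decrease on $[1/2,1]$. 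Property~4 then follows with no further work: by symmetry together with this monotonicity, $K$ attains its maximum at $p=1/2$, where $K=1/4$, hence $K(p)\le 1/4$ throughout $[0,1]$.

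The remaining and most delicate task is the convexity in Property~1, which I would attack by a second differentiation followed by a one-variable sign analysis. Differentiating $K'(p)=\tfrac{t-\sinh t}{t^{2}}$ through $t$ and using $\tfrac{dt}{dp}=\tfrac{1}{pq}>0$ yields $K''(p)=\tfrac{1}{pq}\cdot\tfrac{\phi(t)}{t^{3}}$, where $\phi(t):=2\sinh t-t(1+\cosh t)$. Because $\phi$ is odd and $t^{3}$ is odd, the ratio $\phi(t)/t^{3}$ is even, consistent with the symmetry already established, so by symmetry it suffices to control the sign of $\phi(t)/t^{3}$ on $(0,\infty)$. Confirming the convexity of Property~1 thus amounts to showing $K''(p)\ge 0$, i.e.\ that $\phi(t)/t^{3}\ge 0$ on that range. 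The route I would take is to note $\phi(0)=0$ and then study $\phi$ through its successive derivatives, beginning with $\phi'(t)=\cosh t-1-t\sinh t$ (which also vanishes at $t=0$), to pin down the monotonicity of $\phi$ near and away from the origin, supplementing this with the small-$t$ Taylor expansion of $\phi$ to fix the sign at the symmetry point.

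The hard part will be precisely this sign determination: $\phi$ mixes hyperbolic and linear terms whose competition is not settled by inspection, so the curvature of $K$ is entirely governed by a careful derivative-chasing argument for $\phi$ on $(0,\infty)$ together with its behaviour as $t\to 0$ and $t\to\infty$. Everything else in the proposition is either immediate (symmetry and continuity via the logit substitution) or a direct corollary of the first-derivative computation (the monotonicity of Property~3 and, from it, the bound of Property~4), so I would invest the bulk of the effort in resolving the sign of $\phi(t)/t^{3}$ that controls Property~1.
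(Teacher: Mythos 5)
Your treatment of Properties 2, 3, and 4 is correct and complete: the logit substitution $t=\log(p/q)$, the identity $K=\tanh(t/2)/(2t)$, the derivative $K'(p)=(t-\sinh t)/t^{2}$, and the conclusion via $\sinh t>t$ on $(0,\infty)$ all check out, as do the three limits establishing continuity. Note that this is already a genuinely different route from the paper, whose entire proof is a citation to Lemma~2.1 of the Buldygin--Moskvichova reference; your self-contained calculus argument is more informative than what the paper provides.

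The plan for Property 1, however, would fail, and for a reason you could have caught before starting the second differentiation. Your computation gives $K''(p)=\frac{1}{pq}\cdot\frac{\phi(t)}{t^{3}}$ with $\phi(t)=2\sinh t-t(1+\cosh t)$, and the sign analysis you defer comes out the opposite way: $\phi(0)=0$, $\phi'(t)=\cosh t-1-t\sinh t$ with $\phi'(0)=0$, and $\phi''(t)=-t\cosh t<0$ for $t>0$, so $\phi'<0$ and hence $\phi<0$ on $(0,\infty)$; equivalently, the Taylor expansion is $\phi(t)=-t^{3}/6-t^{5}/40-\cdots$. Thus $K''<0$ and $K$ is strictly \emph{concave} on $(0,1)$, not convex. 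Indeed, convexity is flatly inconsistent with the properties you had already proved: a convex function on $[0,1]$ with $K(0)=K(1)=0$ would satisfy $K(1/2)\le\tfrac{1}{2}\left(K(0)+K(1)\right)=0$, contradicting $K(1/2)=1/4$; a non-constant convex function cannot attain its maximum at an interior point. The word ``convex'' in the proposition is therefore an error in the paper (the true statement, and what the cited lemma can deliver, is concavity), and a blind proof attempt should flag this inconsistency rather than set out to prove $\phi(t)/t^{3}\ge 0$, which is false. With the sign corrected, your machinery closes the proof of the corrected proposition in full.
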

\begin{proof}
	All statements are available almost verbatim in \cite[Lemma~2.1]{buldygin2013sub}.
\end{proof}
Proposition~\ref{amsprop:concIneq} states the concentration inequality for $\hat{C}$.
\begin{amsprop} \label{amsprop:concIneq}
	With $K$ defined by~(\ref{eq:defKp}) and $\hat{C}$ by~(\ref{eq:ChatBasicDef}), we have, for any $\varphi > 0$, 
	\begin{equation} \label{eq:concIneq}
		\mathbb{P} \lbrack | \hat{C} - n_{\rm ppl} | \geq \varphi n_{\rm ppl} \rbrack \leq 2 \exp \left( - \dfrac{\varphi^2}{2} n_{\rm ppl} \dfrac{p^2}{K(p)} \right).
	\end{equation}
\end{amsprop}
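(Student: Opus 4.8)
The plan is to reduce the claim to a standard sub-Gaussian tail bound for a sum of independent centered Bernoulli variables, the only non-elementary input being the sharp variance proxy $K(p)$ supplied by \cite{buldygin2013sub}. First I would rewrite the deviation event in terms of $X$ alone. Since $p = \mathbb{E}[p_i]$ gives $\beta = 1/p$, we have $\hat{C} = X/p$, and because $X \sim B(n_{\rm ppl}, p)$ its mean is $\mathbb{E}[X] = n_{\rm ppl}\, p$. Hence $\hat{C} - n_{\rm ppl} = (X - n_{\rm ppl}\, p)/p$, and
\[
	|\hat{C} - n_{\rm ppl}| \geq \varphi\, n_{\rm ppl} \iff |X - n_{\rm ppl}\, p| \geq \varphi\, p\, n_{\rm ppl}.
\]
So it suffices to prove the binomial bound $\mathbb{P}[|X - n_{\rm ppl}\, p| \geq \varphi\, p\, n_{\rm ppl}] \leq 2\exp(-\varphi^2 n_{\rm ppl}\, p^2/(2K(p)))$.

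Next I would exploit the decomposition $X - n_{\rm ppl}\, p = \sum_{i=1}^{n_{\rm ppl}} (X_i - p)$ as a sum of iid centered Bernoulli$(p)$ terms. The crucial fact, which is precisely the main theorem of \cite{buldygin2013sub}, is that a centered Bernoulli variable is sub-Gaussian with optimal variance proxy equal to $K(p)$; that is, $\mathbb{E}[e^{t(X_i - p)}] \leq \exp(t^2 K(p)/2)$ for every $t \in \mathbb{R}$. By independence the moment generating function factorizes, so $\mathbb{E}[e^{t(X - n_{\rm ppl} p)}] \leq \exp(n_{\rm ppl}\, t^2 K(p)/2)$, i.e.\ the centered sum is sub-Gaussian with variance proxy $n_{\rm ppl} K(p)$.

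From there the tail estimate is a routine Chernoff argument. For the upper tail, applying Markov's inequality to $e^{t(X - n_{\rm ppl} p)}$ with $t>0$ gives $\mathbb{P}[X - n_{\rm ppl}\, p \geq s] \leq \exp(-ts + n_{\rm ppl}\, t^2 K(p)/2)$, and optimizing at $t = s/(n_{\rm ppl} K(p))$ yields $\exp(-s^2/(2 n_{\rm ppl} K(p)))$. Because the variance-proxy bound holds for all real $t$, replacing $t$ by $-t$ controls the lower tail by the identical quantity, so a union bound over the two tails contributes the factor of $2$. Substituting $s = \varphi\, p\, n_{\rm ppl}$ produces the exponent $-(\varphi\, p\, n_{\rm ppl})^2/(2 n_{\rm ppl} K(p)) = -\varphi^2 n_{\rm ppl}\, p^2/(2K(p))$, which is exactly \eqref{eq:concIneq}.

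The hard part will not be the Chernoff mechanics but correctly importing the optimal variance proxy from \cite{buldygin2013sub}: one must confirm that their notion of optimal variance proxy is exactly the uniform-in-$t$ MGF bound stated above, since this is what simultaneously justifies the product over the $n_{\rm ppl}$ summands and the two-sided symmetry used for the lower tail. The remaining ingredients—the reduction from $\hat{C}$ to $X$, the additivity of variance proxies under independence, and the scalar optimization—are all elementary.
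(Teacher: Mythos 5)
Your proposal is correct and follows essentially the same route as the paper: both reduce the event $\lbrace |\hat{C} - n_{\rm ppl}| \geq \varphi n_{\rm ppl}\rbrace$ to $\lbrace |X - n_{\rm ppl}p| \geq \varphi p\, n_{\rm ppl}\rbrace$ for the binomial sum $X$ and then invoke the $K(p)$ variance proxy of \cite{buldygin2013sub}. The only difference is that the paper cites their Corollary~6.1(ii) for the two-sided binomial tail directly, whereas you re-derive that corollary from the single-variable MGF bound via the standard Chernoff argument.
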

\begin{proof}
	The previous subsection has already shown that $X$ is of a sum of $n_{\rm ppl}$ iid. Bernoulli random variables of parameter $p$. Thus, \cite[Corollary~6.1 (ii)]{buldygin2013sub} directly implies
	\begin{equation*}
		\mathbb{P} \lbrack |X - n_{\rm ppl} p| \geq x \rbrack \leq 2 \exp \left( \dfrac{-x^2}{2 n_{\rm ppl} K(p)} \right).
	\end{equation*}
	With $\hat{C} = \beta X = X/p$ and $x = \varphi n_{\rm ppl}p$,
	\begin{align*}
		\mathbb{P} \lbrack |\hat{C} - n_{\rm ppl}| \geq \varphi n_{\rm ppl} \rbrack & = \mathbb{P} \lbrack |X - n_{\rm ppl} p| \geq \varphi n_{\rm ppl}p \rbrack \\
		& \leq 2 \exp \left( - \dfrac{\varphi^2}{2} n_{\rm ppl} \dfrac{p^2}{K(p)} \right). \qedhere
	\end{align*}
\end{proof}
This concentration inequality upper bounds the probability that $\hat{C}$ diverges from its mean $n_{\rm ppl}$ as a function of a proportion $\varphi$ of the mean. In particular, it shows that the probability of a divergence of $\varphi n_{\rm ppl}$ decreases exponentially with the number of people in the area $n_{\rm ppl}$---which means that the relative accuracy of the estimator increases with $n_{\rm ppl}$ and becomes infinite as $n_{\rm ppl} \rightarrow \infty$.\\

As $\lim_{p \rightarrow 1^-} p^2 K(p)^{-1} = + \infty$, if every individual is guaranteed to send one PRB ($p=1$), the relative estimator accuracy is infinite. Conversely, using L'Hôpital's rule,
\begin{equation*}
	\lim_{p \rightarrow 0^+} \dfrac{p^2}{K(p)} = \lim_{p \rightarrow 0^+} \dfrac{2 \log p^{-1}}{p^{-2}} = \lim_{p \rightarrow 0^+} \dfrac{- 2 p p^{-2} }{-2 p^{-3}} = 0,
\end{equation*}
which suggests that if no individual sends PRBs ($p=0$), the estimator is worthless. \\

Finally, this section compares~(\ref{eq:concIneq}) against Hoeffding's inequality (see \cite{hoeffding1963probability} and \cite[Theorem~2.8]{boucheron2013concentration}). Without proof details, one easily obtains Hoeffding's inequality: 
\begin{equation} \label{eq:hoeffdingsIneq}
	\mathbb{P} \lbrack | \hat{C} - n_{\rm ppl} | \geq \varphi n_{\rm ppl} \rbrack \leq 2 \exp \left( - 2\varphi^2 n_{\rm ppl} p^2 \right),
\end{equation}
which is also obtained by using~(\ref{eq:concIneq}) and $K(p) \leq 1/4$ (see \cite[Remark~5.1]{buldygin2013sub}), which shows~(\ref{eq:concIneq}) outperforms~(\ref{eq:hoeffdingsIneq}). In particular, the Hoeffding's inequality fails to predict that $p = 1$ implies a perfect accuracy of the estimator, a task at which the presented concentration inequality~(\ref{eq:concIneq}) succeeds.

\section{Digital architecture} \label{sec:digitalArchitecture}

\subsection{Overview}

Our system comprises i) a set of sensors, ii) a processing subsystem on a central server collecting all PRBs and processing them in real time, and iii) a dumping subsystem (that is part of the central server) that further anonymizes and then dumps PRBs. All communications between the sensors and the central server use layers of authentication; they are secured using HTTPS, thereby encrypting packets and also preventing man-in-the-middle attacks. Figure~\ref{fig:GenArchi} depicts the general system architecture, with each of the three subsystems discussed in the next subsections.

\begin{figure*}[h]
	\centering
	\includegraphics[scale=1.85]{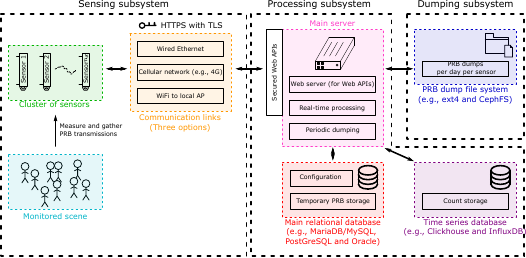}
	\caption{General architecture of the counting system}
	\label{fig:GenArchi}
\end{figure*}

\subsection{The sensing subsystem} \label{subsec:sensingSubsystem}

As shown in Figure~\ref{fig:GenArchi}, the sensing subsystem may be decomposed in three parts: the scene for which to estimate crowd counts, the cluster of $n_{\mathcal{S}}$ sensors deployed to count the crowd and a communication link for each sensor. The communication links may be a wired Ethernet connection, a cellular link or a Wi-Fi connection to a local access point (AP), and they may be different among sensors. Although all three link options are viable, the experiments this manuscript describes were made using 4G communication links only.\\

The sensors i) detect PRBs, ii) anonymize them and iii) send them to a central server.

\subsubsection{Hardware} Each WiFi sensors comprises \cite[Sec.~II-B]{determe2020forecasting}
\begin{itemize}
	\item A Raspberry Pi 3B (running Raspbian Stretch).
	\item An \textit{Alfa AWUS036NHA} WiFi dongle (chipset \textit{Atheros AR9271L}) supporting monitor mode---a state that makes the dongle capture all over-the-air WiFi messages, without being restricted to those of a particular WiFi network. The dipole antennas shipped with \textit{Alfa AWUS036NHA} dongles equip sensors. Sensor antennas point perpendicularly to the ground.
	\item A 4G dongle granting access to the Internet.
\end{itemize}
 
Figure~\ref{fig:figBox} shows a photograph of a sensor.

\begin{figure}[h]
	\centering
	\includegraphics[scale=0.4]{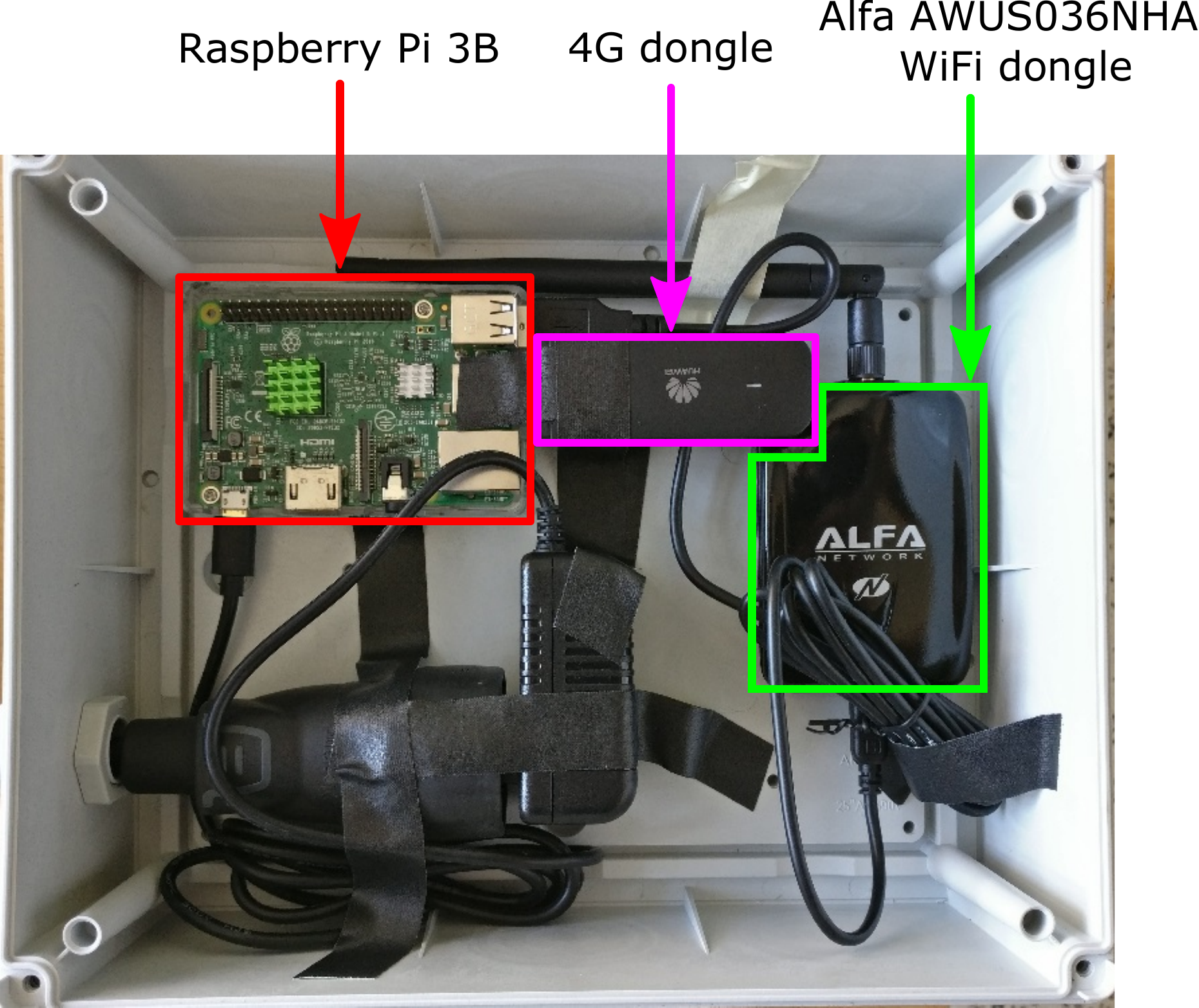}
	\caption{Photograph of the inside of a sensor.}
	\label{fig:figBox}
\end{figure}

\subsubsection{Software}
The sniffing program has been written in multi-threaded C++ and uses packet capture library \textit{libpcap}. For each detected PRB, sensors send \cite[Sec.~II-B]{determe2020forecasting} ``\textit{i) an anonymized MAC address of the PRB, ii) the timestamp of detection iii) a received signal strength indicator (RSSI) value, which is a number quantifying the received power}". Stress tests of the sensors have revealed that neither the WiFi dongle nor the Raspberry Pi fail to handle large PRBs transmission rates.
\subsubsection{Anonymization}

All sensors periodically retrieve from the central server an up-to-date array of (cryptographic) \textit{server peppers}. Each pepper of the array is associated with a one-minute time frame, during which it will be used. The central server regenerates the server peppers in real time, and it deletes old peppers so that they cannot be retrieved in the future. The server uses an entropy pool (\texttt{/dev/urandom} on Linux distributions) to generate cryptographically secure peppers. A \textit{sensor pepper} is also hardcoded in the C++ codebase of all sensors; it is common to all sensors (at least all sensors located in the same area and thus likely to detect identical PRBs simultaneously). It is a final line of defense in case the server peppers get compromised.\\

As depicted in Figure~\ref{fig:anonymProcedureSensor}, for every received probe request, the sensor prepends a global pepper to the full MAC address before computing the SHA-256 hash of the concatenated byte sequence, whose 256 bits are truncated to 64 bits. The pepper is the concatenation of the sensor pepper and the server pepper, both of 128 bits. \\

\begin{figure}[h]
	\centering
	\includegraphics[scale=0.7]{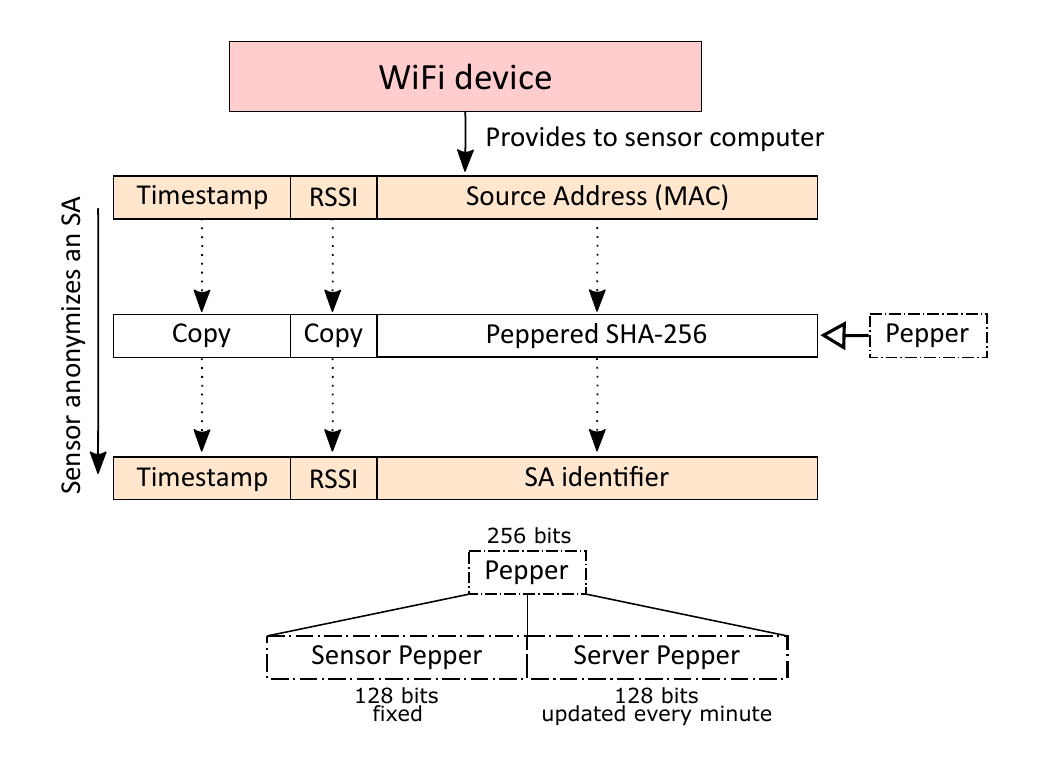}
	\caption{(From \cite{determe2020collisions}) Scheme of the anonymization procedure executed by sensors}
	\label{fig:anonymProcedureSensor}
\end{figure}

As shown in \cite{determe2020collisions}, the system meets four essential requirements. First, time synchronization is accurate enough to make sensors use identical peppers at identical time instants (at least when operating on networks with low latency, such as LTE networks \cite{mivskinis2014timing}). Second, from the SA identifiers, it is realistically impossible to recover the original MAC addresses. Third, tracking individuals for more than one minute is not possible. Fourth, the collision rate of the truncated SHA-256 hash is less than $10^{-9}$ for $10^7$ MAC addresses (which corresponds to an unrealistically high number of individuals). Satisfying the first and fourth requirements ensures anonymization does not tamper with the counting method. The second and fourth requirements consist in privacy-enhancing features.

\subsection{The processing subsystem} \label{subsec:processingSubsystem}
The processing subsystem of Figure~\ref{fig:GenArchi} comprises three submodules. The first one, referred to as ``Web server" is there to allow sensors to interact with the server through secured Web APIs. The second, ``Real-time processing" is the process computing counts, a process that is extensively detailed in what follows. The third, periodic dumping, triggers a dump of PRBs temporarily stored in the main relational database into the PRB dump file system. This remaining part of this subsection discusses the ``Real-time processing'' submodule.\\

The PRBs measured by all sensors are to be processed jointly and usually in real time (this corresponds to ``Real-time Processing" in the processing subsystem of Figure~\ref{fig:GenArchi}). The task here is to generate a count for each time frame of one minute and each sensor of an event, while counting smartphones detected simultaneously by several sensors only once. This will be accomplished by looping through each time frame of one minute and, for each one of them, two main steps are to be carried out: \textit{i)} a filtering operation that extracts all PRBs measured during the time frame \textit{ii)} the association of each observed anonymized MAC address in the filtered dataset to only one sensor: the one having measured the highest signal power---this is a coarse measure of proximity between the device transmitting the PRB and each sensor.\\

As shown in Figure~\ref{fig:GenArchi}, PRBs are stored in a typical relational database or in a file system hosting binary files (each of which gathers PRBs for a specific sensor ID and 24-hour period). Every PRB of the dataset consists of four entries:
\begin{enumerate}
	\item A timestamp \texttt{ts} (whose precision is of one second) that indicates when the PRB has been acquired
	\item A sensor ID \texttt{sensorid} indicating which sensor acquired the PRB
	\item An anonynomized MAC address \texttt{amac} of 64 bits
	\item A received signal strength indicator (RSSI) \texttt{rssi} that quantifies the received power when detecting the PRB.
\end{enumerate} 
In a relational database, an index allows for efficient search using the timestamp \texttt{ts} whereas, in a file system, all files store PRBs sorted by their timestamps.

\subsubsection{First stage of the counting algorithm}
The starting point of processing PRBs is about extracting all the PRBs that have been measured within a one-minute time frame (e.g., from 11:29:01 AM to 11:30:00 AM). This means the 4-tuples (\texttt{ts}, \texttt{sensorid}, \texttt{amac}, \texttt{rssi}) from the database go through a filter that only keeps the entries for which \texttt{ts} is within the time frame limits. This is an easy task because PRBs are already indexed or sorted by their timestamps. This first operation provides a reduced dataset of 3-tuples (\texttt{sensorid}, \texttt{amac}, \texttt{rssi}) that is one of the inputs of the second stage.

\subsubsection{Second stage of the counting algorithm}

Algorithm~\ref{alg:prbprocessing} describes the second stage. Besides the reduced dataset from the first stage, which is the array \texttt{arr\_mac}, the algorithm also uses as input a user-provided hash table of RSSI lower bounds for each sensor, whose key is a sensorid and whose value is an object with only one field, rssilowerbound. This lower bound allows users to exclude any PRB measured by a given sensor whose RSSI value is below rssilowerbound. Because the RSSI is linked to the distance to the sensor, it provides a coarse way of tuning the effective detection range of sensors. Such RSSI bounds are typically stored in the relational database in Figure~\ref{fig:GenArchi} under the name ``Configuration".\\

Besides the inputs, the algorithm initializes an empty hash table \texttt{ht} whose keys are anonymized MAC addresses \texttt{amac} and values are a 2-tuples (\texttt{sensorid}, \texttt{RSSI}), see step 1 in Algorithm~\ref{alg:prbprocessing}). It keeps track of the highest measured RSSI for each anonymized MAC address and of the sensor having measured it. Algorithm~\ref{alg:prbprocessing} also initializes an array of counts \texttt{counts\_per\_sensor} that is initially filled with zeroes (step 17 in Algorithm~\ref{alg:prbprocessing}) and will eventually contain the counts for each sensor for the time frame being processed.\\

The algorithm loops through every reduced PRB in \texttt{arr\_mac} (a 3-tuple (\texttt{sensorid}, \texttt{amac}, \texttt{rssi}) denoted by \texttt{prb}) and extracts its sensor ID (\texttt{sensorid}) and its anonymized MAC address (\texttt{amac}), see steps 2 to 4 in Algorithm~\ref{alg:prbprocessing}. It then determines whether the PRB is to be discarded immediately because its RSSI is below the prescribed threshold for the sensor (step 5). If not discarded, it checks whether the anonymized MAC address \texttt{amac} has already been encountered before (step 6). If so and if the RSSI measured \texttt{prb}.rssi is higher than those encountered so far for \texttt{amac} (step 7), \texttt{ht}[\texttt{amac}] is modified so that \texttt{sensorid} becomes the sensor ID for which the highest RSSI has been observed for \texttt{amac} (steps 8 and 9). Similarly, if \texttt{amac} has never been observed before (step 11), \texttt{ht}[\texttt{amac}] is modified identically (steps 12 and 13).\\

At step 17 of Algorithm~\ref{alg:prbprocessing}, \texttt{ht} contains all the observed anonymized MAC addresses (without duplicates) and, for each one of them, it provides the sensor ID having measured the highest RSSI. It is then sufficient to perform steps 18 to 20 to compute the number of unique devices estimated to be the closest to each sensor. A final step before returning the sensor counts for the current one-minute time frame is step 21, which exists to be explicit about the cleaning of \texttt{ht} and its impact on time complexity.\\

In practice, the counts obtained are stored in a specialized database for time series (see Figure~\ref{fig:GenArchi}). InfluxDB is an example and, with the right compression codecs, Clickhouse has also proved to provide compact storage as well as fast querying. Both databases can be distributed across several nodes to offer robustness, scalability and high throughputs.

\begin{figure}[!h]
	\textsc{Algorithm \refstepcounter{algoCounter}\label{alg:prbprocessing}\arabic{algoCounter}}:\\ 
	Compute counts for a single time frame from PRBs. Comments on the right indicate time complexity (steps without complexity have time complexity $\mathcal{O}(1)$).\\
	\vspace{-2mm}
		\begin{boxedalgorithmic}
			\small
			\Require List of PRBs \texttt{arr\_mac} for the time frame of interest only, 3-tuples (sensorid, amac, rssi); Hash table \texttt{sensors}, of key sensorid and of value $\lbrace$rssilowerbound$\rbrace$
			\State Initialize a hash table \texttt{ht} whose keys are 64-bit anonymized MAC addresses (or equivalently, random tokens) and whose values are 2-tuples (sensorid, rssi).
			\LineComment{Loop through all PRBs associated to time frame of interest}
			\ForAll {\texttt{prb} \textbf{in} \texttt{arr\_mac}} \Comment{$\mathcal{O}(\text{length of {\ttfamily arr\_mac}})$}
			\State \texttt{sensorid} := \texttt{prb}.sensorid \Comment{ $\mathcal{O}(1)$}
			\State \texttt{amac} := \texttt{prb}[amac] \Comment{$\mathcal{O}(1)$}
			\LineComment{Check if PRB should be discarded based on RSSI}
			\If {\texttt{prb}.rssi $>$ \texttt{sensors}[\texttt{sensorid}].rssilowerbound}
			\LineCommentIndent{Check if \texttt{amac} already detected previously}
			\If {\texttt{amac} \textbf{in} \texttt{ht}} \Comment{$\mathcal{O}(1)$}
			\LineCommentIndent{Check if PRB detected has highest RSSI}
			\LineCommentIndent{for \texttt{amac} among those of all sensors}
			\If {\texttt{prb}.rssi $>$ \texttt{ht}[\texttt{amac}].rssi}
			\LineCommentIndent{A new highest RSSI has been found}
			\State \texttt{ht}[\texttt{amac}].sensorid := \texttt{sensorid}
			\State \texttt{ht}[\texttt{amac}].rssi := \texttt{prb}.rssi
			\EndIf
			\Else
			\LineCommentIndent{\texttt{amac} detected for the first time}
			\State \texttt{ht}[\texttt{amac}].sensorid := \texttt{sensorid}
			\State \texttt{ht}[\texttt{amac}].sensorid := \texttt{prb}.rssi
			\EndIf
			\EndIf
			\EndFor
			\State Initialize array of counts \texttt{counts\_per\_sensor} with zeros
			\ForAll {\texttt{elem} \textbf{in} \texttt{ht}} \Comment{$\mathcal{O}(\text{length of {\ttfamily ht}})$}
			\State \texttt{counts\_per\_sensor}[\texttt{elem}.sensorid] += 1 \Comment{$\mathcal{O}(1)$}
			\EndFor
			\State Empty hash table \texttt{ht} \Comment{$\mathcal{O}(\text{length of {\ttfamily ht}})$}
			\State \Return \texttt{counts\_per\_sensor}
	\end{boxedalgorithmic}
\end{figure}

\subsubsection{Complexity analysis}

\begin{figure}
	\begin{lstlisting}[style=CStyle]
		struct prb
		{
			time_t ts; // 32-bit UNIX timestamp
			uint16_t sensorid; // Sensor ID
			int8_t aMAC[8]; // Anonymized MAC addr.
			int8_t rssi; // RSSI
		};
	\end{lstlisting}
	\caption{Example of a C structure representing a probe request burst. In this case, any standard compiler appends 1 trailing pad byte for data alignment purposes; thus, the structure size is 16 bytes. The size of \textit{rssi} is that of the \textit{antenna signal} field of standard \textit{RadioTap} headers.}
	\label{fig:prbCStruct}
\end{figure}

This subsection deals with complexity analysis (in time and in space). Let $n_{\mathcal{S}}$ denote the number of sensors, each one of which capturing no more than $n_{\rm meas}$ PRBs for a one-minute time frame.With the data structure in Figure~\ref{fig:prbCStruct}, storing all the PRBs for a given time frame has a memory footprint of $n_{\mathcal{S}} n_{\rm meas}\; 16\; 10^{-6}$ MB. \\

The memory of the hash table \texttt{ht} used for processing PRBs is also reasonable. Let $n_b$ denote the number of buckets of the hash table. In practice, $n_b$ can be chosen to get a load factor lower than or equal to $\alpha$ so that $n_b = n_{\mathcal{S}} n_{\rm meas} \alpha^{-1}$. Setting the number of buckets beforehand requires one to know the maximum number of PRBs per time frame attained in practice (and the proportion of duplicated PRBs).\\

With a C structure similar to that of Figure~\ref{fig:prbCStruct}, each 2-tuple (\texttt{sensorid}, \texttt{rssi}) of the hash table consists of 8 bytes (including two trailing pad bytes). Assuming that collision resolution relies on separate chaining with linked lists \cite[Chap.~11]{cormen2009introduction}, the baseline memory footprint of the hash table is equal to $n_b \; 8 \; 10^{-6}$ MB on a 64-bit architecture. Every node of the linked list has a memory footprint of $16$ bytes ($8$ bytes for the pointer and $8$ bytes for the 2-tuple value). Thus, loading $n_{\mathcal{S}} n_{\rm meas} \alpha^{-1}$ entries in the hash table has a memory footprint of $n_{\mathcal{S}} n_{\rm meas} (8 \; \alpha^{-1} + 16) \; 10^{-6} $ MB (the first and second terms correspond to the bucket pointers and the nodes of the linked lists, respectively).\\

As a conclusion, processing a large event is computationally tractable from a space complexity point of view. For large events lasting several days, loading all the measurements in memory at once may be impossible but is also pointless: the proposed method processes time frames sequentially and independently from one another.\\

It is now time to turn to time complexity. With a properly designed hash table, insert and search operations have an average time complexity of $\mathcal{O}(1)$. Looping through all entries in \texttt{arr\_mac} has a time complexity of $\mathcal{O}(n_{\mathcal{S}} n_{\rm meas})$. The reason is that the number of loops is $n_{\mathcal{S}} n_{\rm meas}$ (step 2), each one of which including only operations of time complexity  $\mathcal{O}(1)$. Counting all entries in the final hash table with specific sensor IDs has a time complexity of $\mathcal{O}(n_{\mathcal{S}} n_{\rm meas})$ because the prescribed load factor makes the number of buckets directly proportional to $n_{\mathcal{S}} n_{\rm meas}$. Releasing the linked lists of all buckets also has a time complexity of $\mathcal{O}(n_{\mathcal{S}} n_{\rm meas})$ (step 21). Globally, the average time complexity is $\mathcal{O}(n_{\mathcal{S}} n_{\rm meas})$. It is easy to show that the worst-case time complexity is $\mathcal{O}((n_{\mathcal{S}} n_{\rm meas})^2)$---which is attained if all the SA identifiers are mapped onto the same bucket, thereby creating a unique linked list of size $n_{\mathcal{S}} n_{\rm meas}$.

\subsection{The dumping subsystem} \label{subsec:dumpingSubsystem}

\subsubsection{Principle}
The system periodically dumps PRBs from the SQL database into binary files stored in the ``PRB dump file system" in Figure~\ref{fig:GenArchi}. Each dump file corresponds to a particular sensor and a particular day. This keeps in check the size of the SQL table storing PRBs and its indexes. It also makes it straightforward to backup these files in a cheap storage location (e.g., in "cold storage`` facilities).\\

If the system does not ingest excessive throughputs of data, storing binary dump files on ext4 file systems is acceptable and can easily support volumes of at least 8 terabytes (using conventional hard drives or storage solutions from cloud providers). Otherwise, is it possible to use a distributed file system such as CephFS; the latter option provides redundancy, scalable IO throughputs and support for volumes larger than 10 petabytes.

\subsubsection{Anonymization}
The SQL database stores anonymized MAC addresses; theoretically, a deterministic link still exists between the original MAC address and its corresponding SA identifier. Removing the link is beneficial because someone could identify a vulnerability of SHA256 in the future. Therefore, the dumping program randomizes SA identifiers per time frame using, e.g., the Mersenne twister. The links ``SA identifier $\rightarrow$ final SA identifier" are reset after each time frame of one minute. A cryptographically secure pseudorandom number generator (CSPRNG) is not needed as the only requirements are i) to remove any deterministic link between the original MAC address and the identifier ii) and having uniformly distributed identifiers. This approach also makes it impossible for hackers to revert their way back to the original SAs on the basis of the dump files, even if they intercept the peppers.\\

\section{Legal matters about privacy} \label{sec:legalMatters}
Nowadays, an important topic about crowd monitoring systems is whether they comply with privacy laws. This is particularly true in Europe since May 25, 2018---the date that saw the advent of the European general data protection regulation (GDPR). The present system satisfies European and Belgian privacy laws because it does not allow administrators or third parties to (see Section~\ref{subsec:sensingSubsystem})
\begin{itemize}
	\item recover the original MAC addresses or other personal data about individuals carrying the detected WiFi devices,
	\item track MAC addresses over time.
\end{itemize}
In this sense, it is possible to consider that no personal data are processed and, as a result, tracked individuals need not be informed of tracking.

\section{Experimental validation} \label{sec:experimentalValidation}
A previous experimental evaluation focusing on the extrapolation factor for public events is \cite[Sec.~II-E and Fig.~2]{determe2020forecasting}; this former analysis compares the WiFi counts the system generates with those from a telco operator.  This paper and section provide an experimental evaluation of the accuracy of the WiFi system in an indoor environment. Experimental validation relies on third-party counts from Affluences and their \textit{3D Video sensor} system \cite{affluences2020camera}, which has been installed at the entries and exits of the Humanities library at Université libre de Bruxelles (ULB). This provides a ground truth from a third-party, commercially available counting system.\\

As in \cite{determe2020forecasting}, two accuracy measures are used: the root mean square error (RMSE) and the mean absolute percentage error (MAPE). For a time series $\lbrace x_t \rbrace_{0\leq t \leq N-1}$ of $N$ true counts and a time series $\lbrace \hat{x}_t \rbrace_{0\leq t \leq N-1}$ of $N$ approximated counts,

\begin{equation} \label{eq:defRMSE}
	\mathrm{RMSE} :=  \sqrt{\dfrac{1}{N} \sum_{t=0}^{N-1} (x_t - \hat{x}_t)^2}
\end{equation}
and
\begin{equation}
	\mathrm{MAPE} := \dfrac{100 \%}{N} \sum_{t=0}^{N-1} \dfrac{|x_t - \hat{x}_t|}{|x_t|}.
\end{equation}

Both accuracy measures are extensively used in the literature. RMSE is an absolute measure of the error variance and thus tends to penalize high errors proportionally more than smaller ones because of its quadratic nature. MAPE is a relative measure of the error $x_t - \hat{x}_t$ normalized using the ground truth time series $\lbrace x_t \rbrace_{0\leq t \leq N-1}$. MAPE penalizes errors linearly but an absolute error tends to be penalized more if it is associated to a low ground truth count.

\subsection{Measurement setup}

The measurement setup at the Humanities library consists in six sensors installed on three (consecutive) floors of an eight-story building.

\subsection{Extrapolation to account for partial coverage}

In ideal circumstances, sensors cover the whole area to be monitored. In practice, budget or infrastructure constraints may prevent an installation with full coverage and the total counts of people are extrapolated on the basis of counts for a sub-area. Thereby, with $\hat{C}^{\rm (part)}$ denoting the (partial) counts for the covered sub-area, 
\begin{equation} \label{eq:kappaBetaExtrap}
	\hat{C} = \kappa \hat{C}^{\rm (part)} = \kappa \beta X,
\end{equation}
where $\hat{C}$, $\beta$ and $X$ are defined in Section~\ref{subsec:mathSensingModel} and $\kappa$ is an extrapolation factor converting counts for the sub-area into counts for the whole area. (If the whole area is covered, $\kappa = 1$.) The global extrapolation factor is then $\tilde{\beta} := \kappa \beta$. A more complete model that includes noise signals for both extrapolations is
\begin{align*}
	\hat{C} & = \kappa \hat{C}^{\rm (part)} + e^{(\kappa)} \\
	& = \kappa (\beta \dfrac{C}{\kappa \beta} + \epsilon^{(\beta)}) + e^{(\kappa)} \\
	& = C + \kappa \epsilon^{(\beta)} + e^{(\kappa)}\ ,
\end{align*}
where $C$ denotes the true count whereas $e^{(\kappa)}$ and $\epsilon^{(\beta)}$ denote errors linked to the two extrapolation procedures. As the Affluences cameras provide counts for the whole library and the WiFi system covers three stories of out eight, $\kappa > 1$ and $e^{(\kappa)} \neq 0$.

\subsection{Estimate the global extrapolation factor} \label{subsec:estGlobalExtrapFactor}

The global extrapolation factor $\tilde{\beta}$ shall be fit using a least squares approach with $N$ measurements for each subsystem. Let $\boldsymbol{c}^{\rm Affl.} \in \mathbb{R}^N$ and $\boldsymbol{c}^{\rm WiFi} \in \mathbb{R}^N$ denote counts from the Affluences cameras and WiFi subsystem, respectively.Affluences/Camera and WiFi counts are available every 30 minutes and 5 minutes, respectively. The WiFi count series is thus downsampled by 6 to obtain comparable and compatible time series for both subsystems. The linear model $\bsy{y} = \bsy{A} \bsy{x}$ is particularized by the substitutions $\bsy{y} = \boldsymbol{c}^{\rm Affl.}$, $\bsy{A} = \boldsymbol{c}^{\rm WiFi}$ and $\bsy{x} = \lbrack \tilde{\beta} \rbrack \in \mathbb{R}^{1 \times 1}$. The pseudoinverse of $\bsy{A}$ with linearly independent columns is $\bsy{A}^+ = (\boldsymbol{c}^{\rm WiFi})^+ = ((\boldsymbol{c}^{\rm WiFi})^{\rm H} \boldsymbol{c}^{\rm WiFi})^{-1} (\boldsymbol{c}^{\rm WiFi})^{\rm H}$, which provides a least squares estimate for $\tilde{\beta}$ that is 
\begin{equation} \label{eq:globalEqFactorEstimate}
	\textrm{Estimate} \lbrack \tilde{\beta} \rbrack := \langle \boldsymbol{c}^{\rm WiFi}, \boldsymbol{c}^{\rm Affl.} \rangle / \| \boldsymbol{c}^{\rm WiFi} \|_2^2,
\end{equation}
where $\langle \boldsymbol{c}^{\rm WiFi}, \boldsymbol{c}^{\rm Affl.} \rangle$ denotes the inner product of $\boldsymbol{c}^{\rm WiFi}$ and $\boldsymbol{c}^{\rm Affl.}$.

\subsection{Preprocessing pipeline}

The WiFi system works all the time; however, its accuracy should only be evaluated during opening times. To do this, a preprocessing pipeline processes both the Affluences and WiFi time series in the following way:
\begin{enumerate}
	\item Extract a particular time frame with counts available every 30 minutes (all days from 2019-04-02 until 2019-06-01)
	\item Remove week-ends, holidays and days during which any of the two systems was malfunctioning; the following days were removed: 2019-04-22 (holiday), 2019-05-01 (holiday), 2019-05-14 (tests), 2019-05-23 (Affluences malfunction) and 2019-05-30 (holiday).
	\item Restrict the time ranges to those during which the library is guaranteed to be opened (from 9:00 AM to 6:00 PM).
\end{enumerate}

\subsection{Results with a unique global extrapolation factor} \label{subsec:resultsGlobalExtrapFactor}

As a first step, the analysis relies on the pessimistic assumption that $\tilde{\beta}$ is constant over time. This is not necessarily true because sensors cover only three floors of the library and students pursue different endeavors over time; for example, many projects are over by May, which means that the students spread differently in the floors of the library as they use less frequently the rooms to discuss with fellow classmates for projects. This pessimistic approach generates a lower bound on the accuracy of the WiFi system because $e^{(\kappa)} \neq 0$ and $e^{(\kappa)}$ is an error term linked to the partial coverage that does not usually appear for ideal installations. In other words, any system with limited coverage would be subject to noise $e^{(\kappa)}$ and all systems with full coverage have $e^{(\kappa)} = 0$. \\

Figure~\ref{fig:ResOneExtrapFactorTimeRest} compares Affluences counts against WiFi ones, for the restricted time frame running from 09:00 AM to 6:00 PM. Figure~\ref{fig:ResOneExtrapFactor} does the same but displays the full days, which makes the plot easier to read.\\

The estimated global extrapolation factor is equal to $5.031$ (for time frames of $t_f = 60$ seconds), see~(\ref{eq:globalEqFactorEstimate}). Comparatively, in our previous studies with full coverage \cite{determe2020forecasting, singh2020crowd}, we obtained a value of $3$ for time frames of $t_f = 30$ seconds (which is equivalent to an extrapolation factor of $1.5$ with $t_f = 60$ seconds). This suggests $\kappa \simeq 5/1.5 \simeq 3.33$, which is realistic given the coverage (three floors out of eight).\\ 

\begin{figure}[h]
	\centering
	\includegraphics[scale=0.4]{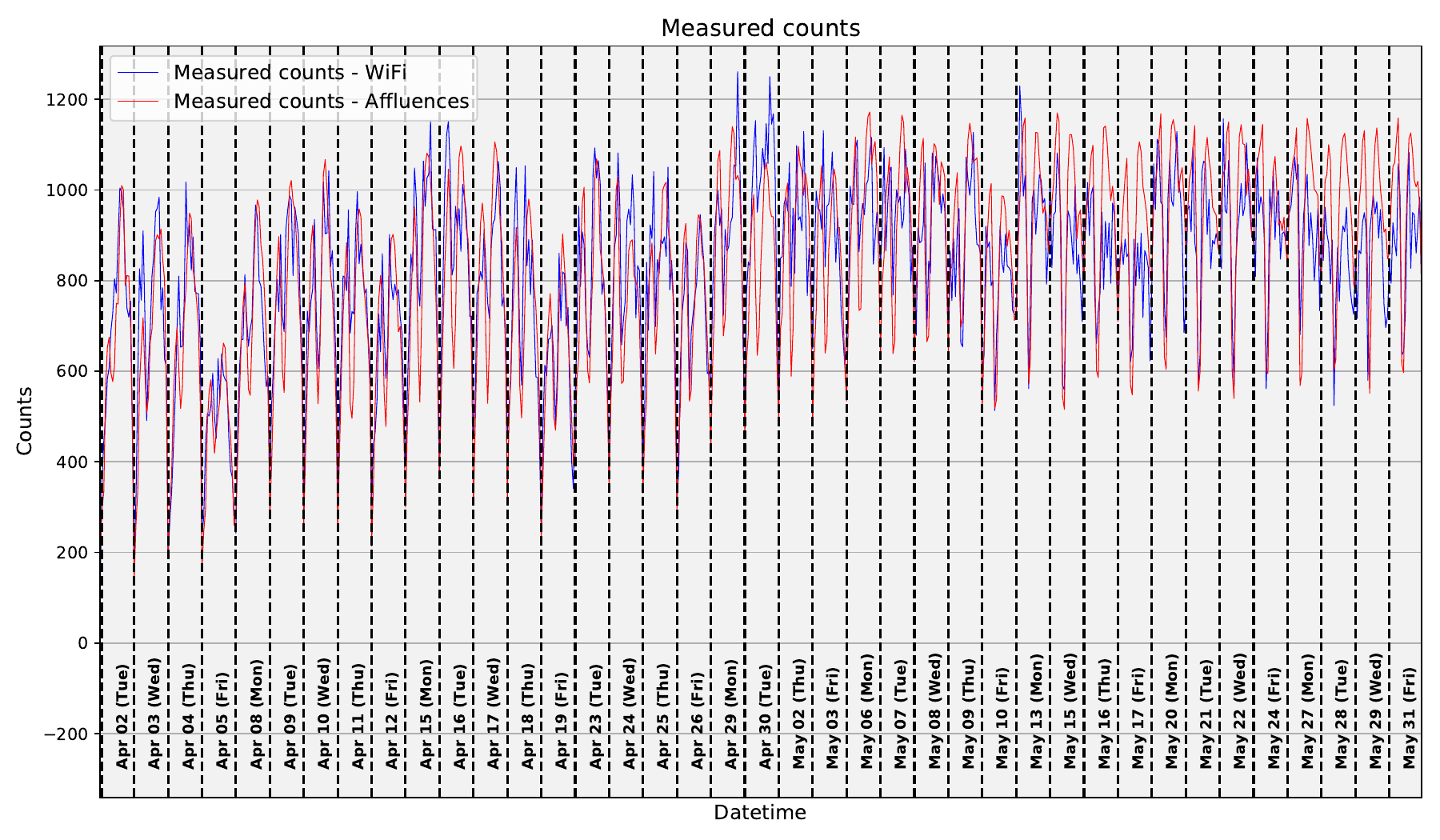}
	\caption{Comparison from 9:00 AM to 6:00 PM of camera and WiFi counts on selected days, with a global extrapolation factor estimate of 5.031, obtained as described in Section~\ref{subsec:estGlobalExtrapFactor}. Affluences refers to a third-party camera counting system and is a ground truth.}
	\label{fig:ResOneExtrapFactorTimeRest}
\end{figure}

\begin{figure}[h]
	\centering
	\includegraphics[scale=0.40]{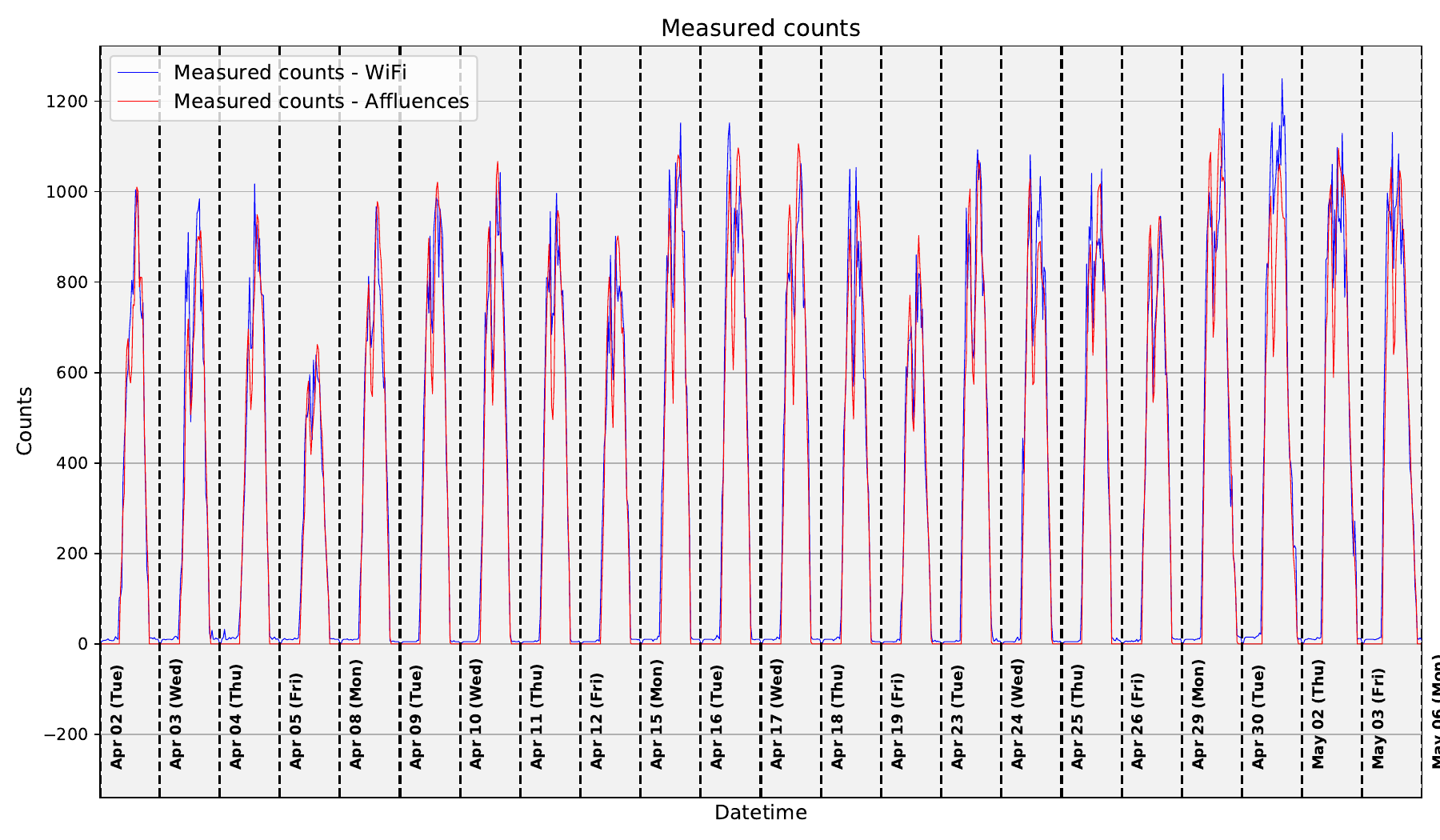}
	\includegraphics[scale=0.40]{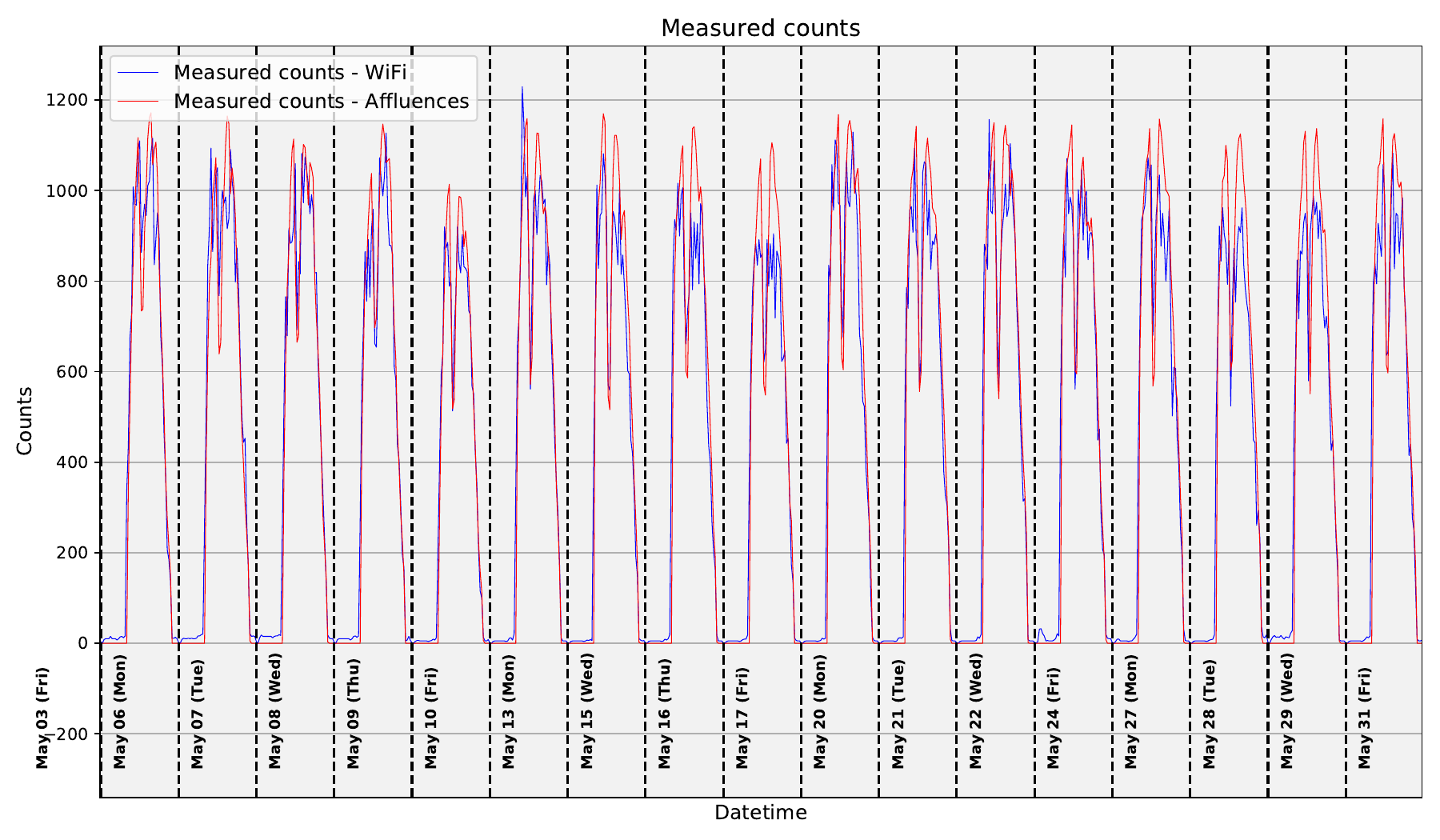}
	\caption{Full comparison of camera and WiFi counts on selected days, with a global extrapolation factor of 5.031. Affluences refers to a third-party camera counting system and is a ground truth.}
	\label{fig:ResOneExtrapFactor}
\end{figure}

For Figure~\ref{fig:ResOneExtrapFactorTimeRest}, the RMSE and MAPE values are of 120.9 and 12.7~\%, respectively. The mean of the counts is equal to 824. These figures are thus upper bounds on the error of the WiFi system.\\

The accuracy estimate based on indoor measurements is pessimistic for large events or buildings because the experiment we could carry out suffers from errors linked to:
\begin{enumerate}
	\item the relatively low number of people in the monitored area (300 people on the three stories against thousands in larger events)
	\item the extrapolation of the crowd counts from three floors to eight floors
	\item the use of RSSI thresholds that we have been tuned coarsely. In large events or using directional WiFi antennas, however, such thresholds would not be necessary as the whole area is large and surrounding areas do not host a significant number of attendees.
\end{enumerate}

\subsection{Results with weekly global extrapolation factors}

This final subsection estimates the global extrapolation factor for each week separately, to better reflect the time-varying distribution of the students across the different floors. Mathematically, it translates into a partial extrapolation factor $\kappa$ in~(\ref{eq:kappaBetaExtrap}) being a function of time. Again, the time-varying nature of the extrapolation factor stems purely from the monitoring of a sub-area and extrapolation of counts to the full area. The global extrapolation factor $\beta$ is constant for events or buildings that are fully covered. \\

Albeit a rather theoretical exercise, compensating the time-varying nature of the extrapolation factor gets accuracy figures closer to those that would have been obtained with full coverage. The improvements resulting from this exercise also suggest that a partial coverage leads to inflated errors in comparison to full-coverage scenarios.\\

Table~\ref{tab:weeklyExtrapFactorResults} reports the results, including the ones of Sec.~\ref{subsec:resultsGlobalExtrapFactor} on its last row. It shows that the global extrapolation factor estimates increase over time, which stems from the humanities library becoming more crowded as examination sessions get closer. A possible explanation is that students favor working on floors that happen to be covered by the sensors and move to the remaining floors as seating options become scarcer; thus, the global extrapolation factor increases over time. \\

Finally, as expected, using extrapolation factors optimized per week improves the average RMSE and MAPE in comparison to using the one obtained for the global, aggregated time series. Nevertheless, the RMSE and MAPE improvements stemming from using weekly extrapolation factors are lower than 10 \%.

\begin{table}
	\centering
	\caption{In ``average", all weeks are weighted identically (that is, without taking into account that some weeks comprise only four days). ``Global time series" corresponds to statistics obtained on the aggregated time series, as described in Sec.~\ref{subsec:resultsGlobalExtrapFactor}.}
	\begin{tabular}{c || c | c | c | c}
		Week starting on & $\textrm{Estimate} \lbrack \tilde{\beta} \rbrack$ & Mean of counts & RMSE & MAPE  \\
		\hline \hline
		2019-04-01 & 4.75 & 604 & 90.3 & 13.4 \% \\
		2019-04-08 & 4.89 & 708 & 94.1 & 12.0 \% \\
		2019-04-15 & 4.84 & 752 & 108.2 & 12.9 \% \\
		2019-04-22 & 4.79 & 752 & 99.5 & 10.5 \% \\
		2019-04-29 & 4.66 & 862 & 138.0 & 13.6 \% \\
		2019-05-06 & 5.11 & 913 & 120.0 & 11.3 \% \\
		2019-05-13 & 5.37 & 934 & 120.0 & 10.9 \% \\
		2019-05-20 & 5.24 & 962 & 105.6 & 9.2 \% \\
		2019-05-27 & 5.47 & 954 & 124.3 & 11.5 \% \\
		\hline
		Average & 5.01 & 827  & 111.1  & 11.7 \% \\
		\hline \hline 
		Global time series & 5.03 & 824 & 120.9 & 12.7 \%
	\end{tabular}
	\label{tab:weeklyExtrapFactorResults}
\end{table}

\section{Practical considerations when deploying sensors} \label{sec:practicalConsiderations}

In public events, our experience is that sensors are often not connected to a dedicated power supply line, sharing instead power supplies with other devices (e.g., lightning devices). These other circuits may be unplugged to save power at night or during daytime. Even if the sensors were connected to dedicated circuits, these could malfunction or be shut down for maintenance without prior notice. Therefore, we recommend making sensors unaffected by improper shutdowns, by using high-quality persistent storage (e.g., using high-end eMMC memory) and by mounting the operating system in read-only mode.

\section{A comparison of the present counting system with its contenders} \label{sec:comparisonContenders}
Before reaching the conclusion, it is important to compare the present system with its main contenders listed in Table~\ref{tab:comparisonWorks} of Section~\ref{subsec:stateoftheart}. In particular, accuracy is an interesting basis of comparison. The accuracy figure obtained in this work (of about 12 \%) is comparable or better than all the listed works except for \cite{li2018experimental}, which is a cooperative system in that it requires individuals to connect their WiFi devices to access points. Moreover, \cite{li2018experimental} has not been tested for areas hosting hundreds of individuals. The work \cite{donelli2021crowd} has sometimes better accuracy and sometimes worse accuracy than the present counting system; it also has not been tested on crowds of more than 100 people (see \cite[Fig.~6]{donelli2021crowd}). Of course, it is always dangerous to compare accuracy figures without testing all counting systems on a common monitoring area. Unfortunately, such an endeavor is impossible to carry out given that many of the contenders are novel solutions that are not yet commercially available and are expensive and time-consuming to reimplement (with inevitable implementation differences anyway). Nevertheless, accuracy comparisons make it possible to determine whether the accuracy of different counting systems are similar, which appears to be the case here. \\

As a conclusion, the counting system that this manuscript presents is a strong contender in comparison to the other existing systems, especially for large crowds (at least a few hundreds of people or more). It does not depend on user cooperation and has been experimentally validated on large crowds. Moreover, it does not require costly equipment and the required sensor density (of about one sensor per 25 m x 25 m = 400 m$^2$ for dense crowds) is comparable or lower than that of other counting systems (in particular, it is an order of magnitude below the sensor density required for \cite{denis2020large, kaya2020large, denis2020sensing}, which ranges from one sensor/(15 m$^2$) to one sensor/(40 m$^2$)). Finally, among non-cooperative systems, our accuracy figures are competitive.\\

This work is also unique in that it provides a statistical model of the counting process and derives a concentration inequality that shows its relative accuracy increases with the number of monitored individuals. In this sense, it offers some degree of theoretical validation.

\section{Conclusion} \label{sec:conclusion}

This paper describes a crowd monitoring system relying on probe requests transmitted by attendees' smartphones in the monitored area. This system is suitable for indoor and outdoor areas hosting at least a few hundreds of attendees. The monitoring system ensures strict privacy requirements are met and is therefore compatible with modern privacy laws. We provided both theoretical and experimental evidence that our system computes accurate estimates of the number of attendees. Despite non-ideal experimental conditions, the MAPE we computed is of less than 13 \%.

\section*{Acknowledgments}
We thank Innoviris for funding this research through the MUFINS project and Brussels Major Events for their active collaboration. We also thank the IT team working at the ULB's Humanities library. Finally, we are grateful to the Icity.Brussels project and FEDER/EFRO grant for their support. 

\bibliographystyle{IEEEtran}
\bibliography{mybib}

\begin{thebibliography}{10}
\providecommand{\url}[1]{#1}
\csname url@samestyle\endcsname
\providecommand{\newblock}{\relax}
\providecommand{\bibinfo}[2]{#2}
\providecommand{\BIBentrySTDinterwordspacing}{\spaceskip=0pt\relax}
\providecommand{\BIBentryALTinterwordstretchfactor}{4}
\providecommand{\BIBentryALTinterwordspacing}{\spaceskip=\fontdimen2\font plus
\BIBentryALTinterwordstretchfactor\fontdimen3\font minus
  \fontdimen4\font\relax}
\providecommand{\BIBforeignlanguage}[2]{{%
\expandafter\ifx\csname l@#1\endcsname\relax
\typeout{** WARNING: IEEEtran.bst: No hyphenation pattern has been}%
\typeout{** loaded for the language `#1'. Using the pattern for}%
\typeout{** the default language instead.}%
\else
\language=\csname l@#1\endcsname
\fi
#2}}
\providecommand{\BIBdecl}{\relax}
\BIBdecl

\bibitem{martella2017current}
C.~Martella, J.~Li, C.~Conrado, and A.~Vermeeren, ``On current crowd management
  practices and the need for increased situation awareness, prediction, and
  intervention,'' \emph{Safety science}, vol.~91, pp. 381--393, 2017.

\bibitem{still2014introduction}
G.~K. Still, \emph{Introduction to crowd science}.\hskip 1em plus 0.5em minus
  0.4em\relax CRC Press, 2014.

\bibitem{determe2020forecasting}
J.-F. Determe, U.~Singh, F.~Horlin, and P.~De~Doncker, ``{Forecasting Crowd
  Counts With Wi-Fi Systems: Univariate, Non-Seasonal Models},'' \emph{IEEE
  Transactions on Intelligent Transportation Systems}, vol.~22, no.~10, pp.
  6407--6419, 2021.

\bibitem{singh2020crowd}
U.~Singh, J.-F. Determe, F.~Horlin, and P.~De~Doncker, ``{Crowd Forecasting
  Based on WiFi Sensors and LSTM Neural Networks},'' \emph{IEEE Transactions on
  Instrumentation and Measurement}, vol.~69, no.~9, pp. 6121--6131, 2020.

\bibitem{cecaj2021sensing}
A.~Cecaj, M.~Lippi, M.~Mamei, and F.~Zambonelli, ``{Sensing and forecasting
  crowd distribution in smart cities: Potentials and approaches},'' \emph{IoT},
  vol.~2, no.~1, pp. 33--49, 2021.

\bibitem{kaminska2019indoor}
A.~Kami{\'n}ska-Chuchma{\l}a and M.~Gra{\~n}a, ``Indoor crowd 3d localization
  in big buildings from wi-fi access anonymous data,'' \emph{Sensors}, vol.~19,
  no.~19, p. 4211, 2019.

\bibitem{uras2020pma}
M.~Uras, R.~Cossu, E.~Ferrara, A.~Liotta, and L.~Atzori, ``{PmA: A real-world
  system for people mobility monitoring and analysis based on Wi-Fi probes},''
  \emph{Journal of Cleaner Production}, p. 122084, 2020.

\bibitem{singh2020crowdreview}
\BIBentryALTinterwordspacing
U.~Singh, J.-F. Determe, F.~Horlin, and P.~D. Doncker, ``{Crowd Monitoring:
  State-of-the-Art and Future Directions},'' \emph{IETE Technical Review},
  2020. [Online]. Available:
  \url{https://doi.org/10.1080/02564602.2020.1803152}
\BIBentrySTDinterwordspacing

\bibitem{ryu2020wifi}
S.~Ryu, B.~B. Park, and S.~El-Tawab, ``{WiFi Sensing System for Monitoring
  Public Transportation Ridership: A Case Study},'' \emph{KSCE Journal of Civil
  Engineering}, pp. 1--13, 2020.

\bibitem{gade2014thermal}
R.~Gade and T.~B. Moeslund, ``{Thermal cameras and applications: a survey},''
  \emph{Machine vision and applications}, vol.~25, no.~1, pp. 245--262, 2014.

\bibitem{agarwal2014algorithms}
R.~Agarwal, S.~Kumar, and R.~M. Hegde, ``{Algorithms for crowd surveillance
  using passive acoustic sensors over a multimodal sensor network},''
  \emph{IEEE Sensors Journal}, vol.~15, no.~3, pp. 1920--1930, 2014.

\bibitem{calabrese2014urban}
F.~Calabrese, L.~Ferrari, and V.~D. Blondel, ``{ Urban sensing using mobile
  phone network data: a survey of research},'' \emph{Acm computing surveys
  (csur)}, vol.~47, no.~2, pp. 1--20, 2014.

\bibitem{donelli2021crowd}
M.~Donelli and G.~Espa, ``{A Crowd Monitoring Methodology based on the Analysis
  of the Electromagnetic Spectrum},'' \emph{Global Journal of Research In
  Engineering}, 2021.

\bibitem{yildirim2021super}
H.~C. Yildirim, J.-F. Determe, L.~Storrer, F.~Rottenberg, P.~De~Doncker,
  J.~Louveaux, and F.~Horlin, ``{Super resolution passive radars based on
  802.11 ax Wi-Fi signals for human movement detection},'' \emph{IET Radar,
  Sonar \& Navigation}, vol.~15, no.~4, pp. 323--339, 2021.

\bibitem{denis2020large}
S.~Denis, B.~Bellekens, A.~Kaya, R.~Berkvens, and M.~Weyn, ``Large-scale crowd
  analysis through the use of passive radio sensing networks,'' \emph{Sensors},
  vol.~20, no.~9, p. 2624, 2020.

\bibitem{kaya2020large}
A.~Kaya, S.~Denis, B.~Bellekens, M.~Weyn, and R.~Berkvens, ``Large-scale
  dataset for radio frequency-based device-free crowd estimation,''
  \emph{Data}, vol.~5, no.~2, p.~52, 2020.

\bibitem{denis2020sensing}
S.~Denis, B.~Bellekens, M.~Weyn, and R.~Berkvens, ``{Sensing thousands of
  visitors using radio frequency},'' \emph{IEEE Systems Journal}, 2020.

\bibitem{zhang20193d}
L.~Zhang and H.~Wang, ``{3D-WiFi: 3D localization with commodity WiFi},''
  \emph{IEEE Sensors Journal}, vol.~19, no.~13, pp. 5141--5152, 2019.

\bibitem{musa2012tracking}
A.~Musa and J.~Eriksson, ``Tracking unmodified smartphones using wi-fi
  monitors,'' in \emph{Proceedings of the 10th ACM conference on embedded
  network sensor systems}, 2012, pp. 281--294.

\bibitem{li2018experimental}
K.~Li, C.~Yuen, S.~S. Kanhere, K.~Hu, W.~Zhang, F.~Jiang, and X.~Liu, ``An
  experimental study for tracking crowd in smart cities,'' \emph{IEEE Systems
  Journal}, vol.~13, no.~3, pp. 2966--2977, 2018.

\bibitem{uras2019pma}
M.~Uras, R.~Cossu, and L.~Atzori, ``{PmA: a solution for people mobility
  monitoring and analysis based on WiFi probes},'' in \emph{2019 4th
  International Conference on Smart and Sustainable Technologies
  (SpliTech)}.\hskip 1em plus 0.5em minus 0.4em\relax IEEE, 2019, pp. 1--6.

\bibitem{demir2014analysing}
L.~Demir, M.~Cunche, and C.~Lauradoux, ``Analysing the privacy policies of
  {Wi-Fi} trackers,'' in \emph{Proceedings of the 2014 workshop on physical
  analytics}, 2014, pp. 39--44.

\bibitem{demir2017pitfalls}
L.~Demir, A.~Kumar, M.~Cunche, and C.~Lauradoux, ``The pitfalls of hashing for
  privacy,'' \emph{IEEE Communications Surveys \& Tutorials}, vol.~20, no.~1,
  pp. 551--565, 2017.

\bibitem{marx2018hashing}
M.~Marx, E.~Zimmer, T.~Mueller, M.~Blochberger, and H.~Federrath, ``Hashing of
  personally identifiable information is not sufficient,'' \emph{SICHERHEIT
  2018}, 2018.

\bibitem{gast2005802}
M.~Gast, \emph{802.11 wireless networks: the definitive guide}.\hskip 1em plus
  0.5em minus 0.4em\relax " O'Reilly Media, Inc.", 2005.

\bibitem{matte2016defeating}
C.~Matte, M.~Cunche, F.~Rousseau, and M.~Vanhoef, ``Defeating {MAC} address
  randomization through timing attacks,'' in \emph{Proceedings of the 9th ACM
  Conference on Security \& Privacy in Wireless and Mobile Networks}.\hskip 1em
  plus 0.5em minus 0.4em\relax ACM, 2016, pp. 15--20.

\bibitem{freudiger2015talkative}
J.~Freudiger, ``How talkative is your mobile device?: an experimental study of
  {W}i-{F}i probe requests,'' in \emph{Proceedings of the 8th ACM Conference on
  Security \& Privacy in Wireless and Mobile Networks}.\hskip 1em plus 0.5em
  minus 0.4em\relax ACM, 2015, p.~8.

\bibitem{vanhoef2016mac}
M.~Vanhoef, C.~Matte, M.~Cunche, L.~S. Cardoso, and F.~Piessens, ``Why {MAC}
  address randomization is not enough: An analysis of {W}i-{F}i network
  discovery mechanisms,'' in \emph{Proceedings of the 11th ACM on Asia
  Conference on Computer and Communications Security}.\hskip 1em plus 0.5em
  minus 0.4em\relax ACM, 2016, pp. 413--424.

\bibitem{martin2017study}
J.~Martin, T.~Mayberry, C.~Donahue, L.~Foppe, L.~Brown, C.~Riggins, E.~C. Rye,
  and D.~Brown, ``A study of {MAC} address randomization in mobile devices and
  when it fails,'' \emph{Proceedings on Privacy Enhancing Technologies}, vol.
  2017, no.~4, pp. 365--383, 2017.

\bibitem{buldygin2013sub}
V.~Buldygin and K.~Moskvichova, ``{The sub-Gaussian norm of a binary random
  variable},'' \emph{Theory of probability and mathematical statistics},
  vol.~86, pp. 33--49, 2013.

\bibitem{hoeffding1963probability}
W.~Hoeffding, ``Probability inequalities for sums of bounded random
  variables,'' \emph{Journal of the American Statistical Association}, vol.~58,
  no. 301, pp. 13--30, 1963.

\bibitem{boucheron2013concentration}
S.~Boucheron, G.~Lugosi, and P.~Massart, \emph{{Concentration inequalities: A
  nonasymptotic theory of independence}}.\hskip 1em plus 0.5em minus
  0.4em\relax Oxford university press, 2013.

\bibitem{determe2020collisions}
J.-F. Determe, S.~Azzagnuni, U.~Singh, F.~Horlin, and P.~De~Doncker,
  ``Collisions of uniformly distributed identifiers with an application to
  {MAC} address anonymization,'' \emph{arXiv preprint arXiv:2009.09876}, 2020.

\bibitem{mivskinis2014timing}
R.~Mi{\v{s}}kinis, D.~Jokubauskis, D.~Smirnov, E.~Urba, B.~Maly{\v{s}}ko,
  B.~Dzindzel{\.e}ta, and K.~Svirskas, ``{Timing over a 4G (LTE) mobile
  network},'' in \emph{2014 European Frequency and Time Forum (EFTF)}.\hskip
  1em plus 0.5em minus 0.4em\relax IEEE, 2014, pp. 491--493.

\bibitem{cormen2009introduction}
T.~H. Cormen, C.~E. Leiserson, R.~L. Rivest, and C.~Stein, \emph{Introduction
  to algorithms}.\hskip 1em plus 0.5em minus 0.4em\relax MIT press, 2009.

\bibitem{affluences2020camera}
{Affluences SAS}, ``Crowd counting,''
  https://www.pro.affluences.com/comptage-de-personnes?lang=en (retrieved on
  2020-09-09).

\end{thebibliography}

\end{document}